\begin{document}

\title{On The Security Evaluation of Partial Password Implementations}



\author{Theodosis Mourouzis\inst{1} \and Marcin Wojcik\inst{2}
 \and Nikos Komninos \inst{3}}

\institute{Cyprus International Institute of Management, Nicosia, Cyprus,\\
\email{theodosis@ciim.ac.cy}\\ 
\and
University of Cambridge,
Computer Laboratory, Cambridge, UK\\
\email{marcin.wojcik@cl.cam.ac.uk}
\and
City, University of London, Department of Computer Science, London, UK,\\
\email{nikos.komninos.1@city.ac.uk}}

\maketitle

\begin{abstract}
A \textit{partial password} is a mode of password-based authentication that is
widely used, especially in the financial sector.
It is based on a challenge-response protocol, where at each login attempt,
a challenge requesting characters from randomly selected positions of
a pre-shared secret is presented to the user.
This mode could be seen as a
``cheap way'' of preventing for example
a malware or a keylogger installed on
a user's device to learn the full password in a single step.
Despite of the widespread adoption of this mechanism, especially
by many UK banks, there is
limited material in the open literature.
Questions like how the security of the scheme varies with the
sampling method employed to form the challenges or what are the existing server-side implementations are left unaddressed.
In this paper, we study questions like how the security of this mechanism
varies in relation to the number of challenge-response pairs available to an attacker
under different ways of generating challenges.
In addition, we discuss possible server-side implementations as (unofficially) listed in different online forums 
by information security experts. To the best of our knowledge 
there is no formal academic literature in this direction and one of the aims 
of this paper is to motivate other researchers to 
study this topic. 
\end{abstract}

\begin{keywords}
authentication, passwords, partial passwords, server-side implementation,
recording attacks, dictionary attacks, keyloggers
\end{keywords}

\newpage

\section{Introduction}\label{sec1}

The design of a secure and efficient user authentication scheme is one of the major concerns for most
enterprises and organizations. A significant amount of money, time and effort are invested
every year to carry out research in this direction. According
to Cybersecurity Ventures, the U.S Government has invested
more than \$$50$ million over the past four years in Multi-Factor Authentication (MFA) techniques,
aiming to improve a simple password-based authentication scheme~\cite{csecreport}. Additionally, many academic
studies in the past studied extensively the security
and usability of password-based authentication techniques
\cite{angela2,statattack1,jobush,manber,humanident1}.

Despite the fact that several methods of authentication,
such as hardware tokens, biometrics, mouse and keyboard keystroke analytics,
have been developed in
the past few years, a simple password-based scheme is still
the primary mean of authentication for many online services.
This is mainly due to the fact that password-based
 authentication is a cheap, efficient and secure (at least in theory) method of
 authenticating users. As shown in~\cite{angela2}, designing other than simple password-based authentication scheme
might be a very complex task mostly due to the fact that
not only the best security engineering practices, e.g., usability and privacy have 
to be applied, but also the human factor has to be taken into consideration.

The security of password-based systems relies on the user to choose a strong enough
password. If this password is not complex enough, then brute-force or dictionary attacks could potentially breach the security of a system \cite{statattack1}.
Brute-force attacks assume that the distribution of human-chosen passwords
is uniform which is not a practical assumption, as human tend 
to select passwords based on patterns or structures arising from their 
natural language. Relatively recent research
has revealed that this curve (of user-selected passwords) is skewed and more sound mathematical
metrics for the security against guessing attacks using large dictionaries are
presented in \cite{bonneau,bonneau2}.

In addition to the above-mentioned human factor, more sophisticated attacks
using, e.g., malware could be performed. These types of attacks predominantly
exploit various phishing campaigns convincing either directly or by
other means like social engineering approaches the potential victim to
unintentionally install malicious software on the target computing device. Upon infection, the victim's machine
is completely controlled by the attacker who can easily obtain user's passwords in a singe step. 

Researchers have realized this problem, therefore
other identification methods in an attempt to
mitigate single-step disclosure of shared-secret by introducing
time-varying challenges have been proposed~\cite{humanident1,humanident2}.
The partial password scheme is an example of such method where
authentication takes place in the form of challenge-response
pairs, with the challenge requesting a set of characters chosen
randomly from a pre-shared
password. It is considered as a very cheap and effective method
against several attacks that could otherwise compromise
a shared-secret in a single step. It is claimed to be more secure than 
the simple password implementation due to the fact that the size of the responses' space grows in a combinatorial way, 
depending on the implementation. For 
example, for a password of length $n$ and a partial-password implementation requesting $m$ characters out 
of $n$, the 
number of possible responses is $\binom{n}{m}$ if no repetitions are allowed and $n^m$ if repetitions are allowed.

Partial password method is widely deployed in the Banking Sector especially in UK as a part of (at least) 2-factor authentication method ~\cite{FC13paper,bankstas} for authenticating users 
in Internet Banking. It
decreases the probability of success of malware-based attacks since the fraudster cannot really provide to their 
Command-and-Control (CC) server the full password in a single step. Even though, the
fraudsters can sometimes bypass this mechanism by exploiting the weakest link, the human, using HTML injections to modify the page presented to the user and 
request the full password, this scenario is out of the scope of this paper.

In general, all type of attacks applied to the simple password implementations, 
apply also to partial-password implementation schemes. The only difference is 
that the attacker requires more data to launch a successful attack, i.e., intersepting 
more times the authentication handshake in order to either reconstruct the full password 
or get enough data to respond correctly to a new challenge with an overwhelming probability. 
Thus, we have three main type of attacks applied also to partial-password implementations, as follows:

\begin{enumerate}
  \item \textbf{Brute Force}: An attacker uses a computer program or a script that produces all possible password combinations using a fixed alphabet. Then, the attacker tries each password, one by one, until authentication is successful. 
  \item \textbf{Dictionary Attack}: An attacker uses a program or script to try to authenticate by cycling through combinations of common words or using dictionaries based on information related to passwords obtained from compromised servers. 
  \item  \textbf{Key Logger}: An attacker uses a program to track all of a user's keystrokes. 
\end{enumerate}

\textit{Outline of Contributions:} Our motivation is to investigate some open questions~\cite{FC13paper}, such as how security of the partial password scheme varies if challenges are generated using a different method, e.g., allowing the same positions to be requested in the same challenge and how information about user's responses only could be used to speed-up dictionary attacks. The later scenario is
close to the scenario of a hardware-keylogger or to a scenario
where the malware has limited capabilities in terms of intercepting
also the challenge presented to the end-user. 
Considering the fact that half of online users access
their banking account at least twice a week~\cite{guessing1}, there is sufficient information exposed that could be used to launch succesfful attacks. 

In addition, we discuss possible server-side partial password implementations 
as (unofficially) indicated by several information security experts in different online forums \cite{plynt,smartarchitects}. Unfortunately, there is no formal academic literature in this direction and we aim to motivate other 
researchers to work in this direction, as partial password implementations are deployed 
by several major banks in their Internet Banking \cite{FC13paper}. 

This paper is organised as follows. In section 2 related studies are discussed. Section 3 presents the partial password implementations. Section 4 discusses
the security of partial password implementations under different attack scenarios and settings. Finally, section 5 concludes the paper and gives future research directions in the field.

\section{Related Studies}\label{sec2}
In this section we present related studies that fall into partial password mode of
authentication. For example, we describe hardware keyloggers attack scenarios in which
an attacker has data related to the responses but nothing
related to the associated challenges. 

\textit{Hardware Keyloggers:}
A paper by Goring~\textit{et al.}~\cite{CSGoring} studies the case of
a hardware keylogger attack, where the attacker can obtain responses
but not challenges. However, their method is limited to a very particular
case where whenever authentication fails, the server presents again
the same challenge to the user. This potentially allows the attacker
to construct challenge-response pairs by just repeating the authentication process.
In this paper, we further investigate this attack model and we study how we can
use data obtained in a keylogger setting combined with large-dictionaries of 
user-selected passwords in order to
speed up dictionary attacks.

\textit{Partial Password Schemes:}
Another paper by
Aspinall~\textit{et al.}~\cite{FC13paper} studies the
security of a particular partial password implementation,
where the positions requested in the challenges are chosen uniformly at random
without replacement. Furthermore, they study how the security of the system
is related to the number
of challenge-response pairs that the attacker has obtained
(defined in~\cite{FC13paper} as \textit{recording} attacks).
In order to speed-up their attacks they applied frequency analysis of letters of user-selected 
passwords, as appearing in the RockYou dataset~\cite{dataset}. In this paper, we study
a more generic scheme in which the challenges are chosen uniformly at
random and repetitions of positions is allowed. This is claimed to be a more
complex scenario and left as future work in ~\cite{FC13paper} and this is the major contribution of 
this paper.


\section{Partial Password Implementation}\label{sec3}


\subsection{Protocol Description}

A partial password is a challenge on a subset of characters from
a full password.
The overall protocol consist of two phases which could be described as follows~\cite{FC13paper}:

\textbf{A. Registration Phase:} The
user selects a password $p=p_0p_1...p_L$ of a desired length and
usually on a restricted alphabet.


\textbf{B. Login Phase:} The authentication phase is based on the
following challenge-response protocol.

\begin{enumerate}
  \item \textbf{Challenge}:
  The server selects a subset of $m$ integers $i_1,i_2,...,i_m$ from the set $\{0,1,2,..,L\}$ and presents the challenge $(i_1,i_2,...,i_m)$ to the user.

 \begin{center}
\begin{tabular}{ l || c | r | r| r| r| r| r| r|}
Index &  1 & 2 & 3 & 4 & 5 & 6 & 7 &  8   \\
User Password &  p & a & s & s & w &  o &  r&  d  \\
\hline
Challenge &    & 2 &  &  & 5 &  &  &  8     \\
Response  &   & a &  &  & w &   &  &  d \\
\end{tabular}
\end{center}

  \item \textbf{Response}: The response will be of the form $(a_1,a_2,...,a_m)$. The user passes this step only if $a_j=p_{i_j}$ for all $1 \leq j \leq m$.
\end{enumerate}

\noindent If the user's response is not correct, then either
the same or a fresh challenge is presented, while on a subsequent
login trial a fresh challenge is generated in
case of a previous successful authentication. The scenario where the same challenge is presented to the user was studied in~\cite{CSGoring}.

In addition, Aspinall and Just studied the security of the scheme when the integers $i_1,i_2,...,i_m$  are
chosen uniformly at random but without replacement \cite{FC13paper},
while the scenario of repetitions allowed is left as open question as it is 
considered more complex. One of the major contributions of this paper is that we study also this scenario.


\newpage

\subsection{Server-side Implementations}
\label{ssimpelm}


In classical password implementations only the hash of the password is enough to be stored on the server. 
Finding a message for a given hash value (or two different messages with the same hash value) for secure cryptographic hash functions is considered computationally 
hard, thus even an adversary with unrestricted access to the 
hash values cannot deduce the password from the hashes, if a secure cryptographic hash functions is employed, such as SHA-256. 

However, in partial password schemes, a new level of complexity in both storage and validation of the shared-secret on the server side is introduced. 
It is not enough anymore to store the hash of the full password and hence standard password hashing schemes do not apply. 
Instead one has to either store the password in a plaintext, or the hashes of different combinations of each password~\cite{plynt,smartarchitects}. 
For the latter solution, it is not trivial to store the hashes of all the combinations of variable length passwords. 
Possibly, this is the reason why most banks are restricting both the length and the alphabet of the user passwords and only request for up to 4 (maximum) different characters in their partial-password implementation schemes~\cite{FC13paper}.

To the best of our knowledge, there is no formal academic literature discussing the problem of server-side implementation of partial password authentication mechanisms. 
We would like to motivate academic research in this direction as those schemes are widely deployed by major banks around the globe.  
Based on our research findings by searching several online security related forums we have indicated that possible implementations deployed in industry might be as follows~\cite{plynt,smartarchitects}:

\begin{enumerate}
  \item The password is stored in plaintext~\cite{plynt}. 
  This imposes a significant risk from a security point of view as an administrator is likely to have a direct access to the password in plaintext form. 
  Furthermore, if the database is compromised then an adversary has access to all plaintext passwords. 
  This solution might be also not complied with policies requiring hashed or encrypted password storage.
  \item The hashes of all possible combinations of letters are stored per password per user~\cite{plynt,smartarchitects}.
  In a general case, where there are not many constraints applied on a password, this solution might lead to significant database issues in terms of required storage space.
  However survey conducted in~\cite{FC13paper} showed that many banking online systems, that are based on the partial password mode of authentication, impose more or less rigorous restrictions on the length of the password and the size of a character set. 
  In extreme cases password could be restricted to a size of only four characters allowing a character set of size 10 (PIN case \cite{FC13paper}). 
  By applying such restrictions, database storage issues become less demanding and thus this extensive hashing method is more applicable in practice. 
  Under this setting, for a password of length $n$ and a partial password scheme that requests $m$ positions we need to store $\dbinom{n}{m}$ possible hashes, which is translated to $l \times \dbinom{n}{m}$ bits of information per user, if a $l$-bit hash function is employed, e.g., $l$=256 for SHA-256.  

\end{enumerate}


\noindent Another practical implementation that one can think of is the following:

\begin{enumerate}
\setcounter{enumi}{2}

\item The password could be stored on the server in an encrypted form with a use of some symmetric-key scheme, like AES. 
In this case, to mitigate any practical key management issues, keys could be managed by a tamper-resistant hardware, i.e., Hardware Security Module (HSM) or a separate authentication server with employed appropriate access control systems in order to avoid unauthorized users to access the cryptographic key. 
This would provide a black-box interface for encryption and substring verification such that when the password characters are passed to the application they are fed into the HSM or the 
authentication server along with the encrypted password. 
The HSM could then decrypt the password and confirm (or reject) the validity of the provided characters. 
However, the drawback of this method is that during authentication, the full password is decrypted and under certain circumstances leakage of this fully decrypted password could occur. 
\end{enumerate}

Considering the survey conducted in~\cite{FC13paper}, there are surprisingly many tight constraints imposed on passwords used in partial password schemes, i.e., the size of acceptable alphabet and length of the password are relatively small, as well as the number of requested characters in the challenges. 
In the case of Internet Banking authentication, most banks request a password within a given range 
and restricted to a given alphabet, usually the alphanumeric of size 36 or numeric of size 10 characters (PIN).

\section{Security Analysis}

In this section we focus on questions like how many challenge-response pairs are sufficient to reconstruct the shared-secret and how many are needed
in order to guess correctly the next challenge in a partial password protocol with sufficiently high probability.

We essentially study the following three attack scenarios:

\begin{itemize}
  \item \textbf{Recording Attacks:} A malware or a keylogger installed on the user's device is recording several (challenge,response) pairs which are sent to the fraudster's server. The main goal of the fraudster is to reconstruct the password.
  \item \textbf{Next-Challenge Attacks:} Same setting as in recording attacks but in this scenario the attacker would like to know the success rate of
      providing the correct response given some pairs.
  \item \textbf{Attacks With Unknown Challenges:} The attacker runs a dictionary attack and for some reason has only a set of responses, without necessarily knowing the corresponding positions. The idea is to examine if such limited information could benefit a lot a dictionary attack. Since human-selected password distribution is known to be skewed \cite{bonneau,bonneau2} this could be seen as another confirmation of this empirical result.
\end{itemize}

In order to tackle the scenario
where the positions in the challenge could repeat in the same
challenge, we resemble the definition of a \textit{multiset} (cf.~Definition~\ref{def1}).

\begin{definition} \label{def1}
A \textbf{multiset} is a 2-tuple $(A,m)$ where $A$ is some set and \linebreak $m:A\rightarrow \mathbb{N}$
a function from $A$ to the set $\mathbb{N}$.
\end{definition}

\noindent The number of multisets of cardinality $k$, with elements taken
from a finite set of cardinality $n$, is
called the \textit{multiset coefficient}. This number is denoted by $\left(\!\!{n\choose k}\!\!\right)$ and is given
by $\binom{n+k-1}{k}$.



\subsection{Recording Attacks} \label{sec21}

Suppose that the user has agreed on a password
$P=p_0...p_L$ of length $L+1$ with $p_i \in \mathcal{A}$, $\forall 1 \leq i \leq L$, where $\mathcal{A}$ the pre-defined alphabet. We have evaluated the security of partial password implementation in
two different scenarios.

\begin{enumerate}
  \item \textbf{Scenario A (Without Replacement):} The challenge is of the form \linebreak $(i_1,i_2,...,i_m)$
       with $0\leq i_j \leq L$, for all $1 \leq j \leq m$
       and $i_{k'} \neq i_k$ for all $1 \leq k',k \leq m$.
  \item \textbf{Scenario B (With Replacement):} The challenge is of the form \linebreak $(i_1,i_2,...,i_m)$
       with $0\leq i_j \leq L$, for all $1 \leq j \leq m$.
\end{enumerate}

Consider the case where malware, installed on the user computing device, is capturing the responses of the user before the
HTTP POST being encrypted with SSL and sends these responses to
the Command-and-Control server. Then, the threat scenario is
that after sufficient data the attacker would be able either
to reconstruct the full password or have a sufficiently high
probability to response correctly to fresh challenges. The security
analysis of both scenarios is based on Theorem~\ref{theorem1}.

\begin{theorem} \label{theorem1}

Let $X$ the number of different positions of the password that the malware
posses after capturing $k$ challenge-response pairs.
The probability \linebreak $p_k(X=i)$, that the malware knows
exactly $i$ out of the total $L+1$ positions is given by Equation~\ref{eq1} and $2$ for
Scenario~A and~B respectively,

\begin{equation}\label{eq1}
 p_k(X=i)= \begin{cases}
\frac{1}{\binom{n}{m}}\sum_{j=0}^{m}\binom{i-j}{m-j}\binom{n-(i-j)}{j}p_{k-1}(X=i-j) &\text{$m \leq i \leq n, k \geq 1$}\\
1 &\text{$i=k=0$} \\
0 & \text{otherwise}
\end{cases}
\end{equation}
\begin{equation}\label{eq2}
 p_k(X=i)= \begin{cases}
\frac{1}{\left(\!\!{n\choose m}\!\!\right)}\sum_{j=0}^{m}\left(\!\!{i\choose m-j}\!\!\right)\binom{n-(i-j)}{j}p_{k-1}(X=i-j) &\text{$1 \leq i \leq n, k \geq 1$}\\
1 &\text{$i=k=0$} \\
0 & \text{otherwise}
\end{cases}
\end{equation}
\end{theorem}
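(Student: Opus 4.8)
The plan is to treat $X_k$, the number of distinct known positions after $k$ pairs, as a Markov chain on $\{0,1,\dots,n\}$ (writing $n=L+1$) and to read off the two formulas as its one-step transition recurrences. The enabling observation, which I would state first, is an exchangeability/lumping property: because every admissible challenge is equally likely and the $n$ positions are interchangeable, the conditional law of $X_k$ given the whole history depends only on $X_{k-1}$, and in fact only on its \emph{value}, not on \emph{which} positions are already known. Relabelling the known set by any permutation of positions leaves all transition probabilities invariant, which is precisely what licenses a recurrence indexed by $i$ alone.

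Second, I would decompose the event $\{X_k=i\}$ by the number $j$ of genuinely new positions contributed by the $k$-th challenge. If exactly $j$ new positions appear, then $X_{k-1}=i-j$, and $j$ ranges over $0,\dots,m$. The law of total probability then gives
\begin{equation*}
p_k(X=i)=\sum_{j=0}^{m}\Pr\bigl[\text{$k$-th challenge reveals exactly $j$ new positions}\mid X_{k-1}=i-j\bigr]\,p_{k-1}(X=i-j),
\end{equation*}
so the entire task reduces to computing each transition probability by counting. The boundary cases are then immediate: zero challenges reveal zero positions, forcing $p_0(X=0)=1$ and $p_0(X=i)=0$ otherwise, and likewise $p_k(X=0)=0$ for $k\ge 1$.

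For Scenario~A I would count $m$-subsets directly. Fixing a known set of size $i-j$, a challenge revealing exactly $j$ new indices must draw $m-j$ of them from the $i-j$ known positions and the other $j$ from the $n-(i-j)$ unknown ones, giving $\binom{i-j}{m-j}\binom{n-(i-j)}{j}$ favourable challenges among the $\binom{n}{m}$ equally likely ones; dividing and summing over $j$ yields Equation~\ref{eq1}, with the vanishing of $\binom{i-j}{m-j}$ automatically enforcing $m\le i$ and the stated support. For Scenario~B the same skeleton applies, with multisets of size $m$ replacing subsets and each of the $\left(\!\!{n\choose m}\!\!\right)$ multisets taken to be equally likely (the modelling convention implied by the prefactor). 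Choosing which $j$ unknown positions enter the support still contributes $\binom{n-(i-j)}{j}$, and the remaining factor must count multisets of size $m$ supported on the $(i-j)+j=i$ now-known positions in which each of the $j$ new positions occurs at least once. I expect this last count to be the main obstacle, since one must avoid over- or under-counting the mandatory-occurrence constraint; the clean route is to pre-assign one copy to each of the $j$ new positions, consuming $j$ of the $m$ slots, and then distribute the remaining $m-j$ slots freely among all $i$ positions, which is exactly $\left(\!\!{i\choose m-j}\!\!\right)$. Multiplying, dividing by $\left(\!\!{n\choose m}\!\!\right)$, and summing over $j$ produces Equation~\ref{eq2}, after which I would check the support ($i\ge m$ forced in~A, $i\ge 1$ in~B) against the case split in the statement.
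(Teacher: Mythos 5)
Your proposal is correct and follows essentially the same route as the paper's own proof: conditioning on $X_{k-1}=i-j$, summing over the number $j$ of newly revealed positions with the factor $\binom{n-(i-j)}{j}$ for the unseen indices, and counting the remaining $m-j$ choices as $\binom{i-j}{m-j}$ subsets in Scenario~A versus $\left(\!\!{i\choose m-j}\!\!\right)$ multisets over the $i$ now-known positions in Scenario~B. You merely make explicit two points the paper leaves implicit --- the symmetry/lumping argument that lets the recurrence depend on the value of $X_{k-1}$ alone, and the reserve-one-copy justification that the mandatory-occurrence multiset count equals $\left(\!\!{i\choose m-j}\!\!\right)$ --- which strengthens rather than changes the argument.
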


\begin{proof}
If at step $k-1$, the
malware obtained $i-j$ distinct indices and the aim is exactly $i$ by having
another pair, this implies we need to select exactly $j$ from the $n-(i-j)$
unseen ones and select the rest $m-j$ depending on the scenario.
For scenario A, we choose $m-j$ out of the already known $i-j$ indices,
while for B we choose $m-j$ from $i$ indices, allowing repetitions.
\end{proof}

Figure~1 presents how probability varies
against the number of challenge-response pairs. As we
observe, in case of $L+1=8$ and $m=3$, an attacker
can reconstruct the password with probability higher that $70\%$
after recording $7$ pairs in Scenario~A and $11$ for Scenario~B. For $L+1=12$
and $m=3$, $14$ pairs are needed in Scenario~A while $17$ for
Scenario~B for a success
probability $75\%$.



\subsection{Next-Challenge Attack} \label{sec23}

Another question of significant interest is the probability
to respond correctly to a new challenge given $k$ pairs. Denote
these probabilities as $p^A_{k+1}$ and $p^B_{k+1}$ for Scenario~A and~B respectively. Then, we have the following:

\begin{equation}\label{eqq1}
  p^A_{k+1}(i)=\frac{\binom{i}{m}}{\binom{n}{m}},\;\; p^B_{k+1}(i)=\frac{\left(\!\!{i\choose m}\!\!\right)}{\left(\!\!{n\choose m}\!\!\right)}.
\end{equation}

\noindent After $k$ runs, if the attacker knows $i$ positions, the expected number of pairs learned is given by $E^A_{k}$ and $E^B_{k}$ respectively,

\begin{equation}\label{eqq1}
  E^A_{k}=\sum_{i=m}^{n}p_{k}(X=i)\cdot p^A_{k+1}(i),\;\;E^B_{k}=\sum_{i=1}^{n}p_{k}(X=i)\cdot p^B_{k+1}(i).
\end{equation}


In Figure~2 we observe that an attacker has probability higher than $75\%$ to correctly
reply to the next challenge, by having 8 pairs in Scenario A
or equivalently 9 pairs in Scenario B, for $L+1=10$ and $m=3$. Thus,
security of both schemes is similar for average passwords regarding guessing the
next challenge.

\subsection{Attacks With Unknown Challenges } \label{sec22}

In this section, we study the scenario where an
attacker has obtained some information regarding
user's responses, but has no
knowledge to which challenge they correspond. This
is similar to the hardware keyloggers scenario
as mentioned by Goring \textit{et al} in \cite{CSGoring}.
We call this scenario as the
``attacks with unknown challenges'' scenario.

We have experimentally
demonstrated that in case of a dictionary attack if information available from
keyloggers is used, then we have a significant reduction in the dictionary size,
ending up with a reduced number of candidates. This confirms
even more the claim that the probability distribution of
human-selected passwords is skewed \cite{bonneau,bonneau2}. This is due to
the fact that even with having a set of characters randomly selected from a word,
we can limit down tremendously the number of possible candidates in
a dictionary attack. In our experiments, we used as a dictionary the well-studied
RockYou dataset and results are presented in Table 1.

Denoting by $S_P$ the set of available characters
corresponding to a target password $P$ (i.e., for
$P="password"$ $S_P=\{p,a,s,w,o,r,d\}$), we have performed the following three experiments:

\begin{enumerate}
\item \textbf{Experiment A}: $S_P$ and two characters' positions of the password are known.
\item \textbf{Experiment B}: $S_P$ and the length of the password are known
\item \textbf{Experiment C}: $S_P$, the length of the password and two characters' positions are known.
\end{enumerate}

The algorithm we employed to filter down possible password candidates is described in Algorithm~$2$. Note that $R$ is a parameter which is used in order to search for passwords which are close to the length of password $x$ up to a desired margin. In our case we study the scenario
$R=1$, i.e targeting passwords of known length. Table~$1$ presents some
of the results of our experiments.

\newpage 

\begin{algorithm}[!htbp]
\caption{Dictionary-Filter($S_P$,dictionary $D$, $L+1$,$R$)}
\label{CHalgorithm}
\begin{algorithmic}[1]
\State Initialize an empty list $L_D$
\For{each $x$ \Pisymbol{psy}{206} $D$ }
\State Compute $S_x$, the set of distinct character appearing in the word $x$
\State Compute $A=S_P \cap S_x$
\State \textbf{Experiment A:}
\State \textbf{if} $S_P \subset A$ and $(x_i,x_j)=(P_i,P_j)$ known:
\State                  $  x\rightarrow L_D$
\State \textbf{Experiment B:}
\State \textbf{if} $S_P \subset A$ and $|x|=R.(L+1)$:
\State                  $  x\rightarrow L_D$
\State \textbf{Experiment C:}
\State \textbf{if} $S_P \subset A$ and $|x|=R.(L+1)$ and $(x_i,x_j)=(P_i,P_j)$ :
\State                  $  x\rightarrow L_D$
\EndFor
\end{algorithmic}
\end{algorithm}

\begin{table}[!htbp]
\begin{center}
\begin{tabular}{l|{c}|r||r|r|r|r|}
Password  &$S_x$  & $R$  & Experiment A & Experiment B & Experiment C   \\
\hline
password &$\{a,d,o,p,r,s,w\}$  & 1.0  & 2456& 36 & 12  \\
baseball  &$\{a,b,e,l,s\}$ & 1.0&1435&39 &1  \\
dragon &$\{a,d,g,n,o,r\}$ & 1.0  &3378 &29  &3\\
admin  & $\{a,d,i,m,n\}$ &  1.0  &  3695& 17 & 7 \\
querty & $\{e,q,rt,u,y\}$&1.0   &381&4 &1 \\
\end{tabular}
\caption{The number of possible password candidates.}
\end{center}
\end{table}

From Table 1 we can observe that by knowing the set of distinct characters we can speed up the dictionary attack tremendously.
This is expected to happen since humans tend to select words
from their natural language and thus the distribution of
possible $n$-grams follow a certain distribution.
In our future work we plan to study how the number of
possible candidates varies with $R$, i.e., the attacker
posses a fraction of the password's characters. This would be
complex to implement and study.

\newpage

\section{Conclusion}\label{sec:Conclusion}

Partial passwords is a mode of authentication which
is widely deployed by the industry and especially in
UK banking sector~\cite{FC13paper}. It was proposed as a countermeasure
against attacks that could reveal a shared secret in a single step~\cite{humanident1,humanident2}. It is
a challenge-response protocol, where the challenge is of the form, ``\textit{What are the characters of your password
at positions 1,5 and 9 ?}''.

In this paper, we extend the work of Aspinall~\textit{et al.}~\cite{FC13paper}, and
study some of the open questions stated in the same paper. We investigate and
compare the security of several partial password implementations in which the elements in the challenges are generated uniformly at random but without replacement against the one where the replacements are allowed. The latter cases
seems to be more secure, especially for attackers aiming to fully reconstruct the password. They also benefit from simpler implementation since we don't need
to check if the next positions in the challenge were already asked.

Finally, we study the scenario where the attacker has access to responses but not challenges
and whether this information is valuable to dictionary-type attacks. We have experimentally demonstrated that such information can tremendously reduce the number of potential password candidates from a given
dictionary and this confirms again the claim
that the probability distribution of human-chosen secrets is
skewed \cite{bonneau,bonneau2}.

\textit{Further Work:} There are several areas that we would like to investigate
in more details. We would like to extend the \textit{hardware
keylogger attack} to
other scenarios like having a percentage, $p$, of characters from the password, how this $p$ affects the number of possible candidates from the dictionary. In addition,
we plan to explore the usability of partial passwords
which is still not studied despite the wide practical adoption of such mechanisms~\cite{usabilityOB}.

\newpage

\begin{figure}[!htbp]
\centering
\includegraphics[scale=0.36]{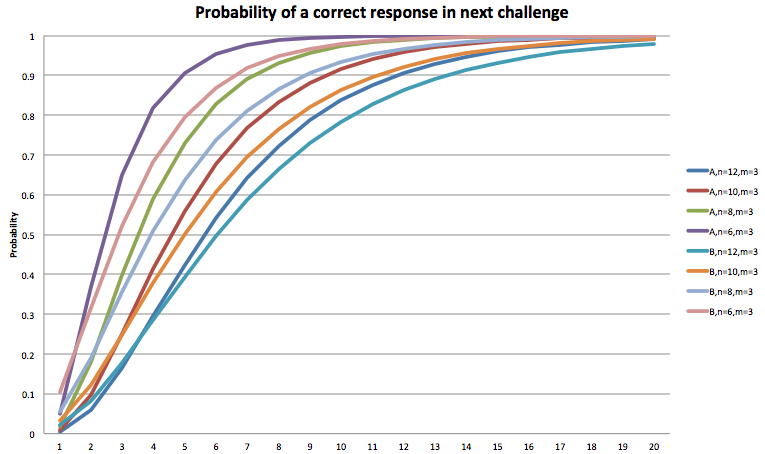}\caption{Probability $p_k(X=i)$ against the number of runs $k$.}
\end{figure}

\begin{figure}[!htbp]
\centering
\includegraphics[scale=0.36]{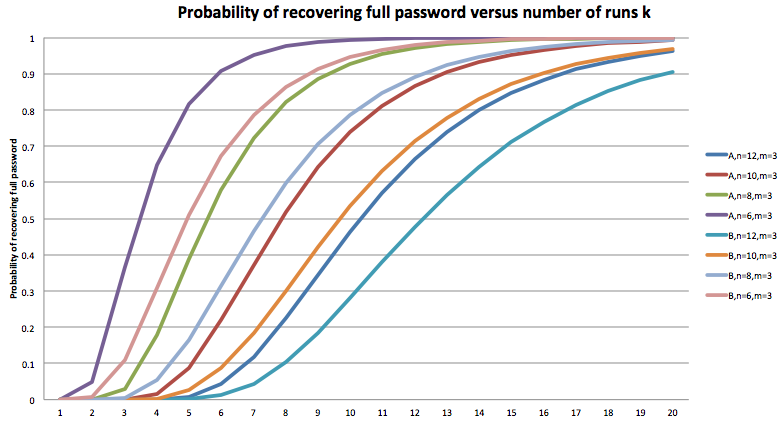}\caption{Expected number of $m$-tuples learned after $K$ runs.}
\end{figure}

\end{document}